\renewcommand{\@algocf@capt@plain}{above}% formerly {bottom}
\theoremstyle{definition}
\newcommand{\floor}[1]{\left\lfloor #1 \right\rfloor}
\newcommand{\dotpr}[2]{\left\langle #1 , #2 \right\rangle}
\newcommand{\dist}[1]{\lVert #1 \rVert}
\newcommand{\prob}[1]{\mathbb{P}\left[ #1 \right]}
\newcommand{\CNN}[2]{\textsf{NN}_{\mathrm{wfn}}(#1, #2)}
\newcommand{\CNNE}{$\textsf{NN}_{\mathrm{wfn}}$}
\newcommand{\pre}[2]{preproc(#1, #2)}
\newcommand{\que}[2]{query(#1, #2)}
\newcommand{\Oc}{\mathcal{O}}
\newcommand{\Rspace}{\mathbb{R}}
\newcommand{\Rdspace}{\mathbb{R}^d}
\newcommand{\Rkspace}{\mathbb{R}^k}
\newcommand*{\CNNT}{$c$--approximate near neighbor}
\title{Improved approximate near neighbor search without false negatives for $l_2$}%\thanks{This work was partially supported by ERC StG project PAAl 259515 and FET IP project MULTIPLEX 317532.}
\author{Piotr Wygocki}
\affil{University of Warsaw, Poland\\
  \texttt{wygos@mimuw.edu.pl}}
\authorrunning{P. Wygocki} %mandatory. First: Use abbreviated first/middle names. Second (only in severe cases): Use first author plus 'et. al.'
\subjclass{F.2.2; G.3}% mandatory: Please choose ACM 1998 classifications from http://www.acm.org/about/class/ccs98-html . E.g., cite as "F.1.1 Models of Computation".
\keywords{approximate near neighbor search, high-dimensional, similarity search, locality sensitive hashing}%
\begin{document}

\maketitle

\begin{abstract}

We present a new algorithm for the $c$--approximate nearest neighbor search
without false negatives for $l_2^d$. 
We enhance the dimension reduction method presented in \cite{wygos_red} and combine it with the standard results of Indyk and Motwani~\cite{motwani}.
We present an efficient algorithm with Las Vegas guaranties for any $c>1$. 
This improves over the previous results, which require $c=\omega(\log\log{n})$ \cite{wygos_red}, where $n$ is the number of the input points.
Moreover, we improve both the query time and the pre-processing time.

Our algorithm is tunable, which allows for different compromises between the query and the pre-processing times.
In order to illustrate this flexibility, we present two variants of the algorithm. The ''efficient query'' variant involves the query time of $\Oc(d^2)$ and the polynomial pre-processing time.
The ''efficient pre-processing'' variant involves the pre-processing time equal to $\Oc(d^{\omega-1} n)$ and the query time sub-linear in $n$, where $\omega$
is the exponent in the complexity of the fast matrix multiplication.

In addition, we introduce batch versions of the mentioned algorithms, where the queries come in batches of size $d$. 
In this case, the amortized query time of the ''efficient query'' algorithm is reduced to $\Oc(d^{\omega -1})$.

\end{abstract}

\section{Introduction}

The near neighbor problem has various applications in image processing, search engines, recommendation engines, prediction and machine learning.
We define the near neighbor problem as follows: for a given input set, a query point and distance $ R $, 
return a point (optionally all points) from the input set closer  than $ R $ to the query point in some metric (usually $ l_p $ for $ p \in [1, \infty] $) 
or report that such a point does not exist.\footnote{Some authors refer to this problem as the Point Location in Equal Balls (PLEB) \cite{motwani}.}
The input set and the distance $ R $ are known in advance. Because of this, the input set can be preprocessed, which can afterwards shorten the query time.
The problem of finding the nearest neighbor with no $R$ given, can be efficiently reduced to the problem defined as above \cite{motwani}.

Unfortunately, the near neighbor search is hard for high dimensional spaces such as $ l_p^d $ for a large~$d$.
The existence of an algorithm with the query time sub-linear in the input set size and non-exponential in $d$ and the pre-processing time non-exponential in $ d $ 
would contradict the strong hypothetical time hypothesis~\cite {Williams2005}.
In order to overcome this obstacle, the $c$--near neighbor problem was introduced.
In this problem, the query result is allowed to contain points located at a maximum distance of $ cR $ from the query point.
In other words, the points located at a distance smaller than $R$ from the query point are classified as neighbors, points further than $ cR $ are classified as ''far away'' points,
while the rest can be classified in either of these two categories. Naturally, we consider $c> 1$.
This assumption makes the problem easier for many metric spaces, such as $ l_p $ for $ p \in [1,2] $ or the Hamming \cite {motwani} space.
On the one hand, the queries are sub-linear in the input size. On the other hand, queries and pre-production time are
polynomial in the dimension of space.

Many previously known algorithms for the \CNNT{} 
use locality sensitive hashing and give Monte Carlo guarantees
for the returned points (see, for example, \cite {DBLP:journals/cacm/AndoniI08, CHAZELLE200824, motwani}). 
That is, any input point within the distance $R$ from the query point is classified as neighbor with some probability, which means there may be false negatives.
Locality sensitive hashing functions are functions which roughly preserve distances, i.e., given two points, the distance between their hashes approximates
the distance between them with high probability.
A common choice for the hash functions is $f(x) = \dotpr{x}{v}$ or $f(x) = \floor{\dotpr{x}{v}}$,
where $v$ is a vector of numbers drawn independently from some probability distribution \cite{DBLP:journals/cacm/AndoniI08, motwani, wygos}.
For the Gaussian distribution, $\dotpr{x}{v}$ is also Gaussian with zero mean and standard deviation equal to $\dist{x}_2$.
It is easy to see that these are LSH functions for $l_2$, but as mentioned above, they only provide probabilistic guaranties.
In this paper, we aim to enhance this by focusing on the \CNNT{} search without false negatives for $l_2$.
In other words, we consider algorithms, where a point 'close' to the query point is guaranteed to be returned. Such class of guaranties is
often called Las Vegas.
An algorithm with Las Vegas guaranties can be adjusted to one with Monte Carlo guaranties. 
Markov's inequality implies, that if the expected value of the computation time is small, then with large probability the computation time is also small.
We can break the computation after the certain amount of time passed and return the empty result which gives Monte Carlo guaranties.

Throughout this paper, we assume that the size of the input set  $n \gg d$ and $\exp(d) \gg n$. This represents a situation where the exhaustive scan through all the input points, as well as the
usage of data structures exponentially dependent on $d$, become intractable. If not explicitly specified, all statements assume the usage of the $l_2$ norm.

\section{Related Work}

\subsection{Algorithms for constant dimension}
There is a number of algorithms for the \CNNT{} problem assuming constant $d$ \cite{Arya:1995:OAA:241626,Kleinberg:1997:TAN:258533.258653, Clarkson:1994:AAC:177424.177609}. 
In each of them, either the pre-processing time or the query time depends exponentially on $d$. 
Nevertheless, these are the best fully deterministic algorithms that are known \cite{Arya:1995:OAA:241626, motwani}.
A particularly interesting algorithm is presented in \cite{motwani}, having the pre-processing time
 $n\Oc(1/\epsilon)^d$ and the query time equal to $\Oc(d)$, where $\epsilon=c-1$. We will use this algorithm to obtain our results.

\subsection{Monte Carlo algorithms}
There exists an efficient Monte Carlo \CNNT{} algorithm for $l_1$ with the query 
and the pre-processing complexity equal to $\Oc(n^{1/c})$ and $\Oc(n^{1+1/c})$, respectively \cite{motwani}.
For $l_2$ in turn, there exists a near to optimal algorithm 
with the query and the pre-processing complexity equal to $n^{1/c^2+ o(1)}$ and $n^{1+1/c^2 + o(1)}$, respectively \cite{DBLP:journals/cacm/AndoniI08} \cite{O'Donnell:2014:OLB:2600088.2578221}.
Moreover, the algorithms presented in \cite{motwani} work for $l_p$ for any $p\in [1,2]$.
There are also data dependent algorithms, taking into account the actual distribution of the input set \cite{data-depended-hashing},
which achieve query time  $dn^{\rho+o(1)}$ and space $\Oc(n^{1+\rho+o(1)}+dn)$, where $\rho=1/(2c^2-1)$.

Recently, the optimal hashing--based time--space trade-offs for the \CNNT{} in $l_2$ were considered \cite{Andoni:2017:OHT:3039686.3039690}.
For any $p_u,p_q\ge 0$ such that:
\[c^2\sqrt{p_q} + (c^2-1)\sqrt{p_u} = \sqrt{2c^2-1}, \]
there is a \CNNT{} algorithm with the storage $\Oc(n^{1+p_u+o(1)} + dn)$ and the query time $n^{p_q + o(1)} + dn^{o(1)}$.

%LOWER BOUND: This lower bound for Hamming space yields a lower bound of 1/c^2for Euclidean space

\subsection{Las Vegas algorithms}
Pagh~\cite{Pagh15} considered the \CNNT{} search without false negatives~(\CNNE{}) for the Hamming space, obtaining the results close to those presented in \cite{motwani}.
He showed that the bounds of his algorithm for $cR = \log( n/k)$ differ by at most a factor of $\ln 4$ in
the exponent in comparison to the bounds in \cite{motwani}. 
Recently, Ahle showed an optimal algorithm for the \CNNT{} without false negatives for the Hamming space
and  Braun-Blanquet metric \cite{DBLP:journals/corr/Ahle17}\cite{O'Donnell:2014:OLB:2600088.2578221}.
Indyk~\cite{DBLP:conf/focs/Indyk98} provided a deterministic algorithm
for $l_{\infty}$ for $c = \Theta(\log_{1+\rho}\log{d})$ with the storage $\Oc(n^{1+\rho}\log^{O(1)}n)$ and the query time  $\log^{O(1)}n$ for some tunable parameter $\rho$. He proved
that the  \CNNT{} without false negatives for  $l_{\infty}$ for $c<3$ is as hard as the subset query problem, a long-standing combinatorial problem. 
This indicates that the  \CNNT{} without false negatives for  $l_{\infty}$ might be hard to solve
for any $c>1$.

Indyk~\cite{Indyk:2007:UPE:1250790.1250881} considered deterministic mappings $l_2^n \rightarrow l_1^m$, for $m=n^{1+o(1)}$, which might be useful for constructing
efficient algorithms for the  \CNNT{} without false negatives. 
If we were able to efficiently embed $l_1$ into the Hamming space (which is just $\{0,1\}^d$ with $l_1$ distance function) with additional guaranties for false negatives,
it would also give an algorithm for $l_2$ and $l_1$.\footnote{
To the best of the author's knowledge, such an embedding will be presented at FOCS 2017 in the conference version of the paper of Ahle~\cite{DBLP:journals/corr/Ahle17}.}

Algorithms for any $p \in [1,\infty]$ are presented in \cite{wygos}.
Two hashing function families are considered, giving different trade-offs between the execution times and the conditions on $c$.
Unfortunately, these algorithms work only for $c>\sqrt{d}$. In further work, Sankowski~et~al.~\cite{wygos_red} showed a dimension reduction 
technique with Las Vegas guaranties. Application of the algorithm introduced in \cite{wygos} to the problem with the reduced dimension results in an algorithm for $c=\Omega(\sqrt{\log{n}})$.
This might be further reduced to $c=\omega(\sqrt{\log\log{n}})$ \cite{wygos_red}. 

In this work, we use the dimension reduction introduced in \cite{wygos_red} and apply the algorithm of Indyk and Motwani~\cite{motwani} to the problem in the reduced space. 
After a slight strengthening of the results of \cite{wygos_red}, we get the algorithms for any $c$.

\subsubsection{Las Vegas dimension reduction}\label{dim_red}

 Sankowski~et~al.~\cite{wygos_red} showed that:
 \begin{restatable}{lem}{primelemma}[Reduction Lemma -- Lemma 1 in \cite{wygos_red}]\label{rjl}

For any parameter $\alpha \ge 1$ and $k < d$, there exist $d/k$ 
linear mappings $A^{(1)}, A^{(2)}, \dots, A^{(d/k)}$, from $\Rdspace$ to $\Rkspace$, such that:
\begin{enumerate}
    \item for each point $x\in \Rdspace$ such that $\dist{x}_2 \le 1$, there exists $1 \le i \le d/k$, which satisfies $\dist{A^{(i)} x}_2 \le 1$,

    \item for each point $x \in \Rdspace$ such that $\dist{x}_2 \ge c$, where $c>1$,
        for each $i$: $1 \le i \le d/k$, we have
        \[\prob{\dist{A^{(i)}x}_2 \le \alpha} < e^{\frac{k}{2}\big(1 - (\frac{\alpha}{c})^2 + \log((\frac{\alpha}{c})^2)\big)} \] 

\end{enumerate}

\end{restatable}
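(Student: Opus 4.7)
The plan is to build the $A^{(i)}$ from a single uniformly random rotation together with block coordinate projections. Let $R$ be drawn from the Haar measure on the orthogonal group $O(d)$, and for $i=1,\dots,d/k$ let $P_i:\Rdspace\to\Rkspace$ extract the $i$-th consecutive block of $k$ coordinates. Define $A^{(i)} x := \sqrt{d/k}\,P_i R x$; this normalization gives $\Ex\dist{A^{(i)} x}_2^2 = \dist{x}_2^2$. Condition~1 will follow deterministically from this single random object, while condition~2 will follow from a Chernoff bound over the randomness of $R$.

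Condition~1 is pigeonhole. Since $R$ preserves the Euclidean norm and the $P_i$ partition the coordinates, $\sum_{i=1}^{d/k}\dist{P_i R x}_2^2 = \dist{R x}_2^2 = \dist{x}_2^2 \le 1$, so at least one $i$ must satisfy $\dist{P_i R x}_2^2 \le k/d$, which rearranges to $\dist{A^{(i)} x}_2 \le 1$. This holds for every realization of $R$ and every $x$ in the unit ball.

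For condition~2, fix $x$ with $\dist{x}_2 \ge c$ and any index $i$. By rotational invariance the probability depends only on $\dist{x}_2$ and is decreasing in it, so it suffices to treat the worst case $\dist{x}_2 = c$. Then $\dist{P_i R x}_2^2 / \dist{x}_2^2$ has the same law as $Y/(Y+Z)$ with independent $Y\sim\chi_k^2$ and $Z\sim\chi_{d-k}^2$, via the Gaussian representation $u \stackrel{d}{=} g/\dist{g}_2$ of the uniform sphere measure. Writing $\beta=\alpha/c$, the event $\dist{A^{(i)} x}_2\le\alpha$ rearranges to $(d-k\beta^2)Y \le k\beta^2 Z$. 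I would bound this probability by the standard MGF/Chernoff recipe: factor over the independent $Y,Z$ via $\Ex e^{tW}=(1-2t)^{-\nu/2}$ for $W\sim\chi^2_\nu$ and optimize the MGF parameter. The optimum produces $\beta^k\left(1+\tfrac{k(1-\beta^2)}{d-k}\right)^{(d-k)/2}$, and a final application of $\log(1+t)\le t$ collapses the second factor into $e^{k(1-\beta^2)/2}$, giving exactly $e^{\frac{k}{2}(1-\beta^2+\log\beta^2)}$.

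The main technical obstacle is the Chernoff step: the MGF parameter has to be tuned so that the $\chi^2_k$-factor contributes exactly $\beta^k$ while the $\chi^2_{d-k}$-factor remains amenable to $\log(1+t)\le t$, yielding a bound independent of $d$. It is tempting to sidestep the rotation by taking the $A^{(i)}$ to be i.i.d.\ Gaussian matrices: then $\dist{A^{(i)} x}_2^2$ is literally a scaled $\chi^2_k$ and condition~2 is immediate. But this destroys condition~1, which must hold for \emph{every} unit-ball point simultaneously; a single random rotation is the device that provides this uniform-in-$x$ deterministic guarantee via pigeonhole, while still leaving each individual projection with a tail matching the $\chi^2_k$ Chernoff bound.
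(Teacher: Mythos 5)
Your proof is correct and takes essentially the same route as the paper: the $A^{(i)}$ are the $d/k$ blocks of $k$ rows of a random orthonormal basis (equivalently a Haar rotation), scaled by $\sqrt{d/k}$; property~1 follows by pigeonhole from $\sum_i \dist{A^{(i)}x}_2^2 = \frac{d}{k}\dist{x}_2^2$; and property~2 reduces, after the worst-case normalization $\dist{x}_2 = c$, to the lower-tail concentration for $\frac{d}{k}\dist{Z}_2^2$ with $Z$ a $k$-coordinate projection of a uniform spherical point. The one difference is cosmetic: the paper invokes this concentration bound as Lemma~\ref{jl1} and leaves it as a black box, whereas you re-derive it from the $Y/(Y+Z)$ beta representation and a Chernoff/MGF optimization (your optimized factor $\beta^k\bigl(1+\tfrac{k(1-\beta^2)}{d-k}\bigr)^{(d-k)/2}$ followed by $\log(1+t)\le t$ does give exactly $e^{\frac{k}{2}(1-\beta^2+\log\beta^2)}$), so your version is simply more self-contained.
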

Since $\log{x} < x - 1 - (x - 1)^2 /2$ for $x<1$, the above bound is not trivial for $\alpha < c$.
Applying the reduction lemma gives the following reduction for the  \CNNT{} without false negatives:

 \begin{restatable}{cor}{primecor}[generalization of Corollary 3 in \cite{wygos_red}]\label{colr}

    For any $1 \le \alpha < c$ and $\gamma\log{n} < d$, the $\CNN{c}{d}$ can be
    reduced to $\Oc(d/(\gamma\log{n}))$ instances of the $\CNN{\alpha}{\gamma\log{n}}$, where $\gamma <\frac{2(1-\nu)}{(\frac{\alpha}{c})^2 -1 -2\log\frac{\alpha}{c}} $ 
    for a tunable parameter $\nu \in[0,1)$
   % <  \big(\frac{2c}{c-\alpha}\big)^2$ 
    and:

    \[\que{c}{d} = \Oc(d^2 + n^{\nu} +  d/(\gamma\log{n})\ \que{\alpha}{\gamma\log{n}}),\]
   \[\pre{c}{d} =\Oc(d^{\omega-1} n  + d/(\gamma\log{n})\ \pre{\alpha}{\gamma\log{n}}).\]
   If the queries are provided in the batches of the size $d$, we obtain the algorithm with the query time:
   \[\que{c}{d} = \Oc(d^{\omega-1} + n^{\nu} +  d/(\gamma\log{n})\ \que{\alpha}{\gamma\log{n}}).\]
\end{restatable}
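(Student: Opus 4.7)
The plan is to apply the Reduction Lemma with $k=\gamma\log n$ to obtain $d/k$ linear maps $A^{(1)},\dots,A^{(d/k)}\colon\Rdspace\to\Rkspace$, and then reduce $\CNN{c}{d}$ to one $\CNN{\alpha}{k}$ sub-instance per map, followed by explicit verification in $\Rdspace$.

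For the pre-processing stage I would stack the $A^{(i)}$ vertically into a single $d\times d$ matrix $A$ and arrange the input points as the columns of a $d\times n$ matrix $X$; the single product $AX$ then produces every image $A^{(i)}p$ in time $\Oc(d^{\omega-1}n)$, by splitting $X$ into $n/d$ square blocks and performing $n/d$ multiplications of cost $\Oc(d^{\omega})$. For every $i$ I would then invoke the $\CNN{\alpha}{k}$ pre-processing on the projected point set $\{A^{(i)}p:p\in P\}$, contributing the additive term $(d/k)\cdot\pre{\alpha}{k}$. For a query $q\in\Rdspace$ I first compute the $d/k$ images $A^{(i)}q$ in $\Oc(d^{2})$ total; in the batched regime, $d$ queries packed into a $d\times d$ matrix $Q$ are handled via the single product $AQ$ computed by square matrix multiplication in $\Oc(d^{\omega})$, amortising to $\Oc(d^{\omega-1})$ per query. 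Each sub-instance is then queried and every returned candidate is verified against $q$ in the original space by an explicit $\Oc(d)$ distance computation.

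The no-false-negatives guarantee is immediate from property~1 of the lemma: for any true near neighbour $p^{*}$ with $\dist{q-p^{*}}_{2}\le R$, the vector $(q-p^{*})/R$ has $l_{2}$-norm at most $1$, so some index $i^{*}$ satisfies $\dist{A^{(i^{*})}(q-p^{*})}_{2}\le R\le\alpha R$; since sub-instance $i^{*}$ is itself without false negatives, $p^{*}$ is reported and accepted at verification.

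The main technical step, and in my view the hardest part, is bounding the expected verification cost via property~2. For each $p$ with $\dist{q-p}_{2}\ge cR$ the probability that $p$ survives sub-instance $i$ is at most $\Exp{-k\beta/2}$, where $\beta=(\alpha/c)^{2}-1-2\log(\alpha/c)>0$ (positivity is already noted in the paper from $\log x<x-1-(x-1)^{2}/2$). With $k=\gamma\log n$ and $\gamma$ at the threshold permitted by the statement, so that $\gamma\beta/2$ equals $1-\nu$ up to lower-order terms, the per-pair probability is at most $n^{\nu-1}$; a union bound over the $n$ far points and the $d/k$ sub-instances yields an expected false-positive count of $\Oc(n^{\nu}d/(\gamma\log n))$. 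Multiplying by the $\Oc(d)$ per-candidate verification cost and absorbing the resulting $\poly{d}$ factors into the dominant $\Oc(d^{2})$ (respectively $\Oc(d^{\omega-1})$) contribution, which is legitimate under the working regime $\exp(d)\gg n$ that keeps $d$ relatively small, yields the claimed bounds, once the inherited sub-instance contributions $(d/k)\cdot\que{\alpha}{k}$ and $(d/k)\cdot\pre{\alpha}{k}$ are added.
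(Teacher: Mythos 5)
Your proposal follows essentially the same route as the paper: apply the Reduction Lemma with $k=\gamma\log n$, stack the maps into a single $d\times d$ matrix so the change of basis for $n$ points is one $\Oc(d^{\omega-1}n)$ multiplication, amortise the query-side projection over a batch of $d$ queries, use property~1 for the no--false--negatives guarantee and property~2 to control the expected false-positive count. The paper factors this through an intermediate Lemma~2 parameterised by $k$ and then substitutes $k=\gamma\log n$; you instantiate directly, but the ingredients and the order of application are identical.

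One point in your write-up is stated incorrectly, though. You argue that the extra factor $d/(\gamma\log n)$ on the false-positive term (and the further $\Oc(d)$ per-candidate verification cost) can be absorbed into the leading $\Oc(d^{2})$ term \emph{because} ``$\exp(d)\gg n$ keeps $d$ relatively small.'' This is backwards: $\exp(d)\gg n$ means $d\gg\log n$, i.e.\ $d$ is \emph{large} relative to $\log n$, which is precisely why $d/(\gamma\log n)$ is a nontrivial blow-up. In fact neither $n^{\nu}d^{2}/(\gamma\log n)=\Oc(d^{2})$ nor $=\Oc(n^{\nu})$ holds in general. To be fair, the paper's own Lemma~2 produces the same $d/k$ factor on the expected number of far survivors and the corollary silently drops it, so this is a looseness you inherit rather than introduce; but the stated reason you give for the absorption is wrong and should be replaced (e.g.\ by choosing $\gamma$ with a small amount of slack below the threshold so that $n^{1-\gamma\beta/2}$ has an extra $n^{-\delta}$ cushion that eats the $\poly{d}$ factor, under a $d=n^{o(1)}$-type assumption), or the factor should simply be carried along in the statement.
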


In the above version of Corollary \ref{colr}, we generalize the Corollary 3 from \cite{wygos_red} in the following way:
\begin{itemize}
\item We introduce an additional parameter $\nu$, which allows us to set different compromises between the pre-processing time and the query time. 
\item We observe, that given the assumption of $n \gg d$, the preprocessing of all points can be expressed as the multiplication of matrices of dimensions $n \times d$ and $d \times d$ which can be performed
in time $\Oc(nd^{\omega -1})$. Analogical argument leads to the conclusion that the query time of the batch version is $\Oc(d^{\omega -1})$.
In further theorems, we provide the amortized  query times for the batch version. This can be easily turned into a non-batch version
by substituting the $d^{\omega-1}$ term with $d^2$ in the query complexities.
\item We present slightly stronger bound for $\gamma$.
\end{itemize}

The proof of Corollary \ref{colr} is essentially the same as the proof presented in \cite{wygos_red}.
For the reader's convenience, this proof is included in Appendix \ref{dim_red_app}.

Combining the above corollary with the results introduced in \cite{wygos}, we can achieve
an algorithm with the polynomial pre-processing time and the sub-linear query time for $c=\Theta(\sqrt{\log{n}})$ \cite{wygos}. In this paper, we relax this restriction and show an algorithm for any $c>1$.

\section{Our contribution}
Recently, efficient algorithms were proposed for solving the \CNNT{} search without false negatives
for $c = \Omega( \max\{\sqrt{d}, d^{1-1/p}\})$ in $l_p$ for any $p \in [1,\infty]$ and for $c=\omega(\sqrt{\log\log{n}})$ for $l_2$ \cite{wygos, wygos_red}.
The main problem with these algorithms is the constraint on~$c$. 
The contribution of this paper is relaxing this condition and improving the complexity of the algorithms for $l_2$:

\begin{theorem}\label{main}
The $\CNN{c}{d}$ can be solved with the amortized query time  $\Oc(d^{\omega-1} + n^{\nu})$ and the pre-processing time:
\begin{itemize}
 \item  $n^{1+\Oc(\frac{1-\nu}{\epsilon^2}\log{\frac{1}{\epsilon}})}$ for any $c<2$,  
  \item $\Oc(d^{\omega -1}n+dn^{1+\Oc(\frac{1-\nu}{\log{c}})}/\log{n})$ for any $c\ge2$,
\end{itemize}
for some tunable parameter $\nu \in [0,1)$ and $\epsilon=c-1$.
\end{theorem}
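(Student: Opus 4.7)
The plan is to feed the low--dimensional algorithm of Indyk and Motwani, which solves $\CNN{\alpha}{k}$ in preprocessing $n\cdot\Oc(1/(\alpha-1))^{k}$ and query time $\Oc(k)$, into Corollary~\ref{colr} with reduced dimension $k=\gamma\log n$. The amortized query time then collapses to
\[
\Oc\!\left(d^{\omega-1}+n^{\nu}+\tfrac{d}{\gamma\log n}\cdot\gamma\log n\right)=\Oc(d^{\omega-1}+n^{\nu})
\]
regardless of $\alpha$ and $\gamma$, while the per--instance preprocessing $n\cdot\Oc(1/(\alpha-1))^{\gamma\log n}=n^{1+\gamma\log(\Oc(1)/(\alpha-1))}$ feeds into a total of
\[
\Oc\!\left(d^{\omega-1}n+\tfrac{d}{\gamma\log n}\cdot n^{1+\gamma\log(\Oc(1)/(\alpha-1))}\right).
\]
The remaining task is to pick $\alpha\in(1,c)$ so that $\gamma\log(\Oc(1)/(\alpha-1))$ attains the exponents in the theorem, using that Corollary~\ref{colr} permits any $\gamma<2(1-\nu)/((\alpha/c)^{2}-1-2\log(\alpha/c))$.

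To understand that denominator, set $\beta=\alpha/c=1-t$ and Taylor--expand:
\[
\beta^{2}-1-2\log\beta = 2t^{2}+\tfrac{2}{3}t^{3}+\tfrac{1}{2}t^{4}+\cdots,
\]
which is $\Theta(t^{2})$ for small $t$ and $\Theta(\log(1/\beta))$ when $\beta$ stays bounded away from $1$; these two asymptotic regimes drive the two cases of the theorem. For $c<2$ I would take $\alpha=(c+1)/2=1+\epsilon/2$, so $t=\epsilon/(2c)=\Theta(\epsilon)$ and the denominator is $\Theta(\epsilon^{2})$; Corollary~\ref{colr} thus allows $\gamma=\Theta((1-\nu)/\epsilon^{2})$. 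Since $\alpha-1=\epsilon/2$ forces $\log(\Oc(1)/(\alpha-1))=\Oc(\log(1/\epsilon))$, the preprocessing exponent above the base $n$ becomes $\gamma\log(\Oc(1)/(\alpha-1))=\Oc(\tfrac{1-\nu}{\epsilon^{2}}\log\tfrac{1}{\epsilon})$, which matches the first claimed bound once the additive $\Oc(d^{\omega-1}n)$ and the $d/(\gamma\log n)$ prefactor are absorbed into this super--polynomial factor of $n$ under the hypothesis $n\gg d$.

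For $c\ge 2$ I would take $\alpha=\sqrt c$, so $\beta=1/\sqrt c\le 1/\sqrt 2$ stays bounded away from $1$. Direct substitution gives $(\alpha/c)^{2}-1-2\log(\alpha/c)=1/c-1+\log c=\Theta(\log c)$ on the whole range $c\ge 2$; a short case check (the value at $c=2$ is $\log 2-1/2>0$, and for large $c$ the $\log c$ term dominates) shows the ratio against $\log c$ is bounded below by a positive constant. Corollary~\ref{colr} therefore allows $\gamma=\Oc((1-\nu)/\log c)$. Since $\alpha-1=\sqrt c-1\ge\sqrt 2-1$, also $\log(\Oc(1)/(\alpha-1))=\Oc(1)$, whence $\gamma\log(\Oc(1)/(\alpha-1))=\Oc((1-\nu)/\log c)$, delivering the second bound $\Oc(d^{\omega-1}n+dn^{1+\Oc((1-\nu)/\log c)}/\log n)$. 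There is no genuine obstacle in the argument: the hard work sits inside Corollary~\ref{colr} and the Indyk--Motwani procedure, and everything reduces to the Taylor bound above plus checking $\alpha<c$ in each regime (both amount to $c>1$).
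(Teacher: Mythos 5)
Your proof is correct and follows essentially the same route as the paper: plug the Indyk--Motwani solver into Corollary~\ref{colr} after choosing $\alpha$ to balance the denominator $(\alpha/c)^{2}-1-2\log(\alpha/c)$. The only variation is in the $c\ge 2$ case, where the paper fixes $\alpha$ to a small constant with $\log 2 > \log\alpha + \tfrac12$ (so the reduced subproblems are always $\CNN{\Oc(1)}{\gamma\log n}$), while you take $\alpha=\sqrt c$ growing with $c$; both keep $\alpha-1$ bounded away from $0$ and make the denominator $\Theta(\log c)$, so they yield identical asymptotics.
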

As mentioned before, we assume, that the queries are provided in batches of size $d$.  This assumption can be omitted, which leads to an algorithm with the query time of $\Oc(d^{2} + n^{\nu})$.
The above is also valid for the other presented algorithms (in particular, for Theorem \ref{main_pre}). We focus on the batch version to avoid unnecessary complexity.

In particular, for $\nu = 0$ and $c\le2$, we achieve an algorithm with the query time $\Oc(d^{\omega-1})$ and the pre-processing time $n^{\Oc(\frac{1}{\epsilon^2}\log{\frac{1}{\epsilon}})}$.
For small $c$, our results are  incomparable with the previously discussed algorithms. 
In particular, our results are similar to the algorithm presented recently in \cite{Andoni:2017:OHT:3039686.3039690}, 
which works with the query time $n^{o(1)}$ and the pre-processing time $n^{\Oc(1/\epsilon^2) + o(1)}$.
Results presented in \cite{DBLP:journals/cacm/AndoniI08, Andoni:2017:OHT:3039686.3039690} give weaker Monte Carlo guaranties.
Increasing the parameter $\nu$ allows us to reduce the preprocessing complexity. 
For $\nu = 1/c$ we achieve an algorithm with the query time $\Oc(n^{1/c})$ and the pre-processing time $n^{\Oc(\frac{1}{\epsilon}\log{\frac{1}{\epsilon}})}$.
Setting $\nu = 1-\epsilon^2 /\log{\frac{1}{\epsilon}}$ gives the algorithm with the polynomial pre-processing time independent of $c$.

In addition, we show the pre-processing efficient versions of the algorithms, which have an optimal in therms of $n$, linear complexity.
\begin{theorem}\label{main_pre}
The $\CNN{c}{d}$ can be solved with the pre-processing time  $\Oc(d^{\omega-1} n)$ and the amortized query time:
\begin{itemize}
 \item  $\Oc(d^{\omega-1} + d n^{\frac{1}{1+\Oc(\epsilon^2 \log^{-1}{\frac{1}{\epsilon}})}})$ for any $c<2$,  
  \item $\Oc(d^{\omega-1} + dn^{\Oc(\frac{1}{\log{c}})})$ for any $c\ge2$,
\end{itemize}
where  $\epsilon=c-1$.

\end{theorem}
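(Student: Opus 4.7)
The plan is to invoke Corollary~\ref{colr} with parameters tuned so that the pre-processing is dominated by the matrix multiplications inside the reduction itself, and to solve each low-dimensional sub-problem with the deterministic constant-dimension structure of Indyk and Motwani~\cite{motwani}. For the sub-problem $\CNN{\alpha}{\gamma\log n}$, that structure has pre-processing $n(1/(\alpha-1))^{\gamma\log n}=n^{1+\Oc(\gamma\log(1/(\alpha-1)))}$ and query time $\Oc(\gamma\log n)$. Plugging these into the bounds of Corollary~\ref{colr} yields
\[
\pre{c}{d} = \Oc\!\left(d^{\omega-1}n + \tfrac{d}{\gamma\log n}\cdot n^{1+\Oc(\gamma\log(1/(\alpha-1)))}\right),
\]
\[
\que{c}{d} = \Oc(d^{\omega-1} + n^\nu + d) = \Oc(d^{\omega-1} + dn^\nu),
\]
after absorbing the free $d$ into $dn^\nu$.

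The next step is to choose $(\alpha,\gamma,\nu)$ so that the second term of the pre-processing collapses into $d^{\omega-1}n$. This gives the constraint $\gamma\le\Oc(\log d/(\log(1/(\alpha-1))\log n))$, which together with the Corollary's own constraint $\gamma<2(1-\nu)/((\alpha/c)^2-1-2\log(\alpha/c))$ pins down $\gamma$ and forces a lower bound on $\nu$. The query exponent of the theorem then comes out of matching these two upper bounds on $\gamma$.

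For $c\ge 2$, taking $\alpha=2$ makes the denominator of the Corollary constraint $\Theta(\log c)$; reconciling it with the pre-processing constraint forces $\nu=\Oc(1/\log c)$, which gives the claimed query $\Oc(d^{\omega-1}+dn^{\Oc(1/\log c)})$. For $c<2$, Taylor-expanding $(\alpha/c)^2-1-2\log(\alpha/c)$ around $\alpha/c=1$ as $2(1-\alpha/c)^2+\Oc((1-\alpha/c)^3)$ and setting $\alpha=1+\Theta(\epsilon)$ makes that denominator $\Theta(\epsilon^2)$, while $\log(1/(\alpha-1))=\Theta(\log(1/\epsilon))$. Matching constraints then yields $\nu=1-\Theta(\epsilon^2/\log(1/\epsilon))$, which rewrites as the $1/(1+\Oc(\epsilon^2/\log(1/\epsilon)))$ exponent of the theorem.

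The main obstacle is the small-$\epsilon$ regime: the Corollary's denominator is only $\Theta(\epsilon^2)$ while the Indyk--Motwani pre-processing loses a factor of $\log(1/\epsilon)$ in the exponent of $n$, and both enter multiplicatively into the constraint on $\gamma$. Their ratio $\epsilon^2/\log(1/\epsilon)$ is exactly the deficit that appears in the query exponent, so tightness of the theorem depends on handling the Taylor expansions carefully. Once the two upper bounds on $\gamma$ are aligned, the remainder is a direct substitution into Corollary~\ref{colr} and bookkeeping over the two cases.
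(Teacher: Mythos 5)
Your plan keeps the Indyk--Motwani structure (pre-processing $n\Oc(1/(\alpha-1))^{\gamma\log n}$, query $\Oc(\gamma\log n)$) for the sub-problems and then tries to shrink $\gamma$ until that pre-processing collapses into $\Oc(d^{\omega-1}n)$. That forces $n^{\Oc(\gamma\log\frac{1}{\alpha-1})}=\Oc(d^{\omega-1})$, i.e.\ $\gamma=\Oc\!\big(\tfrac{\log d}{\log(1/(\alpha-1))\,\log n}\big)$. But $\gamma$ is not a free parameter: to make the expected number of false positives in Lemma~\ref{lemalgo} be $n^{\nu}$, Corollary~\ref{colr} needs $\gamma\approx\tfrac{2(1-\nu)}{(\alpha/c)^2-1-2\log(\alpha/c)}=\Theta\!\big(\tfrac{1-\nu}{\epsilon^2}\big)$; taking $\gamma$ smaller makes the false-positive count larger, not smaller. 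Equating the two expressions gives $1-\nu=\Oc\!\big(\tfrac{\epsilon^2\log d}{\log(1/\epsilon)\,\log n}\big)$, so the $n^{\nu}$ term in the query time is $n\big/d^{\,\Oc(\epsilon^2/\log(1/\epsilon))}$. Under the paper's standing assumption $n\gg d$ this is strictly worse than the claimed $n^{1/(1+\Oc(\epsilon^2\log^{-1}(1/\epsilon)))}=n\big/n^{\,\Oc(\epsilon^2/\log(1/\epsilon))}$; your write-up silently drops the $\log d/\log n$ factor, which is only legitimate if $d=n^{\Theta(1)}$. The same discrepancy appears in the $c\ge 2$ case.

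The missing idea is that the fast pre-processing variant does \emph{not} use the Indyk--Motwani pre-built table for the sub-problems. Instead, each sub-problem stores the $n$ projected points in a plain hash map (pre-processing $\Oc(n)$), and at query time one enumerates all $\Oc(1/(\alpha-1))^{\gamma\log n}=n^{\Oc(\gamma\log\frac{1}{\alpha-1})}$ grid points within distance $\alpha$ of the projected query and probes the hash map for each. This inverts the data structure: the exponential factor is moved from pre-processing into the query, so $\pre{\alpha}{\gamma\log n}=\Oc(n)$ and $\que{\alpha}{\gamma\log n}=n^{\Oc(\gamma\log\frac{1}{\alpha-1})}$. Now $\gamma$ can be taken at its Corollary value $\Theta((1-\nu)/\epsilon^2)$ without touching the pre-processing bound, the total query time becomes $\Oc\big(n^{\nu}+n^{\Oc((1-\nu)\epsilon^{-2}\log(1/\epsilon))}\big)$, and balancing the two exponents in $\nu$ yields exactly the $n^{1/(1+\Oc(\epsilon^2\log^{-1}(1/\epsilon)))}$ of the theorem (and, with $\nu=0$, the $c\ge 2$ bound). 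Without this inversion the two constraints on $\gamma$ cannot be reconciled in the regime $n\gg d$, so the proposal as written does not prove the stated bounds.
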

This gives new results for probably the most interesting case from the practical point of view.
In particular, for $c\le2$, we achieve the query time:
\[\Oc(d^{\omega-1} + d n^{\frac{1}{1+\Oc(\epsilon^2 \log^{-1}{\frac{1}{\epsilon}})}})
= \Oc(d^{\omega-1} + d n^{1-\Oc(\epsilon^2 \log^{-1}{\frac{1}{\epsilon}})})
 = \Oc(d^{\omega-1} + d n^{1-\Oc(\epsilon^3)})\]
Again, our algorithm gives results similar to one presented in \cite{Andoni:2017:OHT:3039686.3039690} which gives the query time equal to $\Oc(d + n^{1- \Oc(\epsilon^2) +o(1)})$.

All of the presented algorithms give Las Vegas guaranties, which are stronger than the previously considered Monte Carlo guaranties. 
The provided algorithms are practical in terms of implementation.

\begin{center}
 \begin{tabular}{|c|  c | c| c| c|} 
 \hline
 author & guaranties  & query& pre-processing & space  \\ [0.5ex] 
 \hline
 \cite{DBLP:journals/corr/Ahle17} & Las Vegas  & $\Oc(n^{1/c})$& $\Oc(n^{1+1/c})$ & Hamming for $c>1$\\ 
 \hline
 \cite{DBLP:conf/focs/Indyk98} & Deterministic  & $\log^{O(1)}n$  &  $\Oc(n^{1+\rho}\log^{O(1)}n)$ & $l_{\infty}$ for $c=\Theta(\log_{1+\rho}\log{d})$  \\
 \hline
 \cite{wygos_red} & Las Vegas  & $ \tilde{\Oc}(d^2 + d n^{o(1)})$ &  $\tilde{\Oc}(d^2 n  + d n^{1 + \frac{\ln{3}}{\ln(c/\mu)}+ o(1)})$ & $l_2$ for $c>\mu=\omega(\sqrt{\log\log{n}})$ \\
 \hline
 \cite{Andoni:2017:OHT:3039686.3039690}  & Monte Carlo  & $\Oc(d + n^{1- \Oc(\epsilon^2) +o(1)})$ & $\Oc(dn + n^{1 +o(1)})$ & $l_2$ for $c>1$ \\
 \hline
  \textbf{this work} & Las Vegas  & $\Oc(d^{\omega-1} + d n^{1-\Oc(\epsilon^2 \log^{-1}{\frac{1}{\epsilon}})})$ & $\Oc(d^{\omega-1}n)$ & $l_2$ for $c>1$ \\ [1ex] 
  \hline
 \cite{Andoni:2017:OHT:3039686.3039690}  & Monte Carlo & $n^{o(1)}$ & $n^{\Oc(1/\epsilon^2) +o(1)}$ & $l_2$ for $c>1$ \\
 \hline
 \textbf{this work} & Las Vegas  & $\Oc(d^{\omega-1})$& $n^{\Oc(1/\epsilon^2\log{1/\epsilon})}$ & $l_2$ for $c>1$ \\ [1ex] 
 \hline
 
\end{tabular}
\captionof{table}{Comparison of the results for the \CNNT{}. We present only ``fast query'' and ``fast pre-processing'' parts of results for possibly small $c$. 
Also, results presented in  \cite{Andoni:2017:OHT:3039686.3039690} are under  assumption that $d=n^{o(1)}$). Results in \cite{DBLP:conf/focs/Indyk98} are for a tunable parameter $\rho$.}
\end{center}

\section{Notations}

The input set is always assumed to contain $n$ points. 
The \CNNT{} search without false negatives with parameter $c > 1$ and the dimension of the space equal to $d$, is denoted as
$\CNN{c}{d}$.  The considered problem is solved in the standard euclidean space (i.e., the standard norm in $l_2$)
$\dist{\cdot}_2$: $\dist{x}_2 = (\sum_i{|x_i|^2})^{1/2}$. The expected query and pre-processing time complexities of
$\CNN{c}{d}$ will be denoted as $\que{c}{d}$ and $\pre{c}{d}$ respectively.

In this paper, we use the term “pre-processing” to refer to the sum of
the actual pre-processing time and the storage required. 
W.l.o.g, throughout this work we assume, that $R$ -- a given radius
equals 1 (otherwise, all vectors' lengths might be rescaled by $1/R$). 
In this work, it is often convenient to use $\epsilon = c-1$ instead of $c$.
Whenever $\epsilon$ appears, it is assumed to be defined as above.
%Also, the $\Oc()$ notation hides all factors polynomial in $1/\epsilon$,
%e.g., $\Oc(1/\epsilon^3) = \Oc(1)$. We slightly abuse the notation and write $\Oc(1/\epsilon)^d$, to 
%denote the complexity of $(C/\epsilon)^d$ for some constant $C$.
Finally, we use $\omega$
to denote the exponent in the complexity of the fast matrix multiplication (currently $\omega \approx 2.37$).

\section{Nearest neighbors without false negatives for any  \texorpdfstring{$\boldsymbol{c}$}{Lg}} \label{sec_qp}

In this section, we show an efficient algorithm for solving the $\CNN{c}{d}$ in $l_2$.
Indyk and Motwani~\cite{motwani} showed an algorithm with the pre-processing time
 $n\Oc(1/\epsilon)^d$ and the query time equal to $\Oc(d)$, where $\epsilon=c-1$.
The idea of the algorithm is the following.
We start with a quantization of the given space, which reduces the problem to finding the near neighbor in a space with integer coefficients. 
After the quantization, there is a finite number of points which have a neighbor in the input set.
It is enough to provide the data structure which will store all such points with accompanying near neighbors from the input set. 
It is  proved that the number of neighbors of each input point is $\Oc(1/\epsilon)^d$. 
So in total, we need to store $n\Oc(1/\epsilon)^d$ of such points. 
We can fetch a point from the data structure in the time proportional to the size of this point, thus the query time is $\Oc(d)$.

The only issue left is to provide appropriate storage. Indyk and Motwani~\cite{motwani} provided a hash-map storage.
Let us consider the following standard, deterministic construction. For each of the stored points, we store its 
 bit representation in a binary tree. This way the length of the branch representing a point equals to its bit-length. 
 Hence, the query time is proportional to the bit size of the query. 
 The size of the whole tree is bounded by the total size of the binary representation of all the stored points.
 
The above construction gives an algorithm for the \CNNT{} in $l_2$ with the efficient query time. 
Unfortunately, unless $d=\Oc(\log{n})$, the pre-processing time is exponential. 
If the dimension is larger than $\gamma \log{n}$, with $\gamma$ defined as in Corollary \ref{colr}, we may reduce the complexity of the pre-processing by reducing the
dimension of the input space.  

\subsection{Fast query}\label{sec_q}
In this section we prove  Theorem \ref{main}:
\begin{itemize}
 \item $c<2$:
 
 For $c<2$, we set $\alpha = \frac{c+1}{2}$ in Corollary \ref{colr}. 
  It follows that $\CNN{c}{d}$ can be reduced to $\Oc(d/(\gamma\log{n}))$ instances of $\CNN{ c/2+1/2}{\gamma\log{n}}$ and:
    \begin{itemize}
     \item the query time equals $\Oc\big(d^{\omega-1} + n^{\nu} +  d/(\gamma\log{n})\ \que{\frac{c+1}{2}}{\gamma\log{n}}\big)$,
    
    \item the pre-processing time equals $\Oc\big(d^{\omega-1} n  + d/(\gamma\log{n})\ \pre{\frac{c+1}{2}}{\gamma\log{n}}\big).$ 
    \end{itemize}
    
    Since  $\log{x} < x - 1 - (x - 1)^2 /2$ for $x<1$, 
    \[\gamma<\frac{2(1-\nu)}{(\frac{\alpha}{c})^2 -1 -2\log\frac{\alpha}{c}} < (1-\nu)\Big(\frac{2c^2}{c^2-\alpha^2}\Big)^2=\Oc\Big(\frac{1-\nu}{\epsilon^2}\Big).\]
 
 Consequently, we reduce the problem to $\Oc(d/(\gamma\log{n}))$ instances of $\CNN{\frac{c+1}{2}}{\Oc(\frac{1-\nu}{\epsilon^2}\log{n})}$.
 By~\cite{motwani}, each of these instances is solved with the pre-processing time $\Oc\big(\frac{1}{\epsilon}\big)^{\frac{1-\nu}{\epsilon^2}\log{n}} = n^{\Oc(\frac{1-\nu}{\epsilon^2}\log{\frac{1}{\epsilon}})}$ 
 and with the query time  $\Oc(\gamma\log{n})$.
 \item $c\ge2$:

 After setting $\alpha$ to any constant value such that $\log{2} > \log{\alpha}  + 1/2$ in Corollary \ref{colr}, we reduce the problem to $\Oc(d/(\gamma\log{n}))$ instances of $\CNN{\Oc(1)}{\Oc(\gamma\log{n})}$.
 We have:
    \[\gamma<\frac{2(1-\nu)}{(\frac{\alpha}{c})^2 -1 -2\log\frac{\alpha}{c}} <\frac{1-\nu}{\log{c} - \log{\alpha} -\frac{1}{2}} = \Oc\Big(\frac{1-\nu}{\log{c}}\Big).\]
 Each of these instances is solved with the pre-processing time $n\Oc(1)^{\frac{(1-\nu)\log{n}}{\log{c}}} =  n^{1+\Oc(\frac{1-\nu}{\log{c}})}$ and with the query time  $\Oc(\gamma\log{n})$. 
 
\end{itemize}  
This ends the proof of the Theorem \ref{main}.

\subsection{Fast pre-processing}
In this section, we prove Theorem \ref{main_pre}.
To achieve the algorithm with the fast, linear pre-processing, we will store all input points in the hash-map. 
During the query phase, we will ask the hash-map for all of $\Oc(1/\epsilon)^d$ of points which are close to this query point.
This way, most of the computation is moved from the pre-processing to the query phase. The dimension reduction is used in a similar manner as in Section \ref{sec_q}.
We skip the computation steps which are analogical to the corresponding ones in the previous section.

\begin{itemize}
 
\item For $c<2$, after setting $\alpha = c/2+1/2$ in Corollary \ref{colr} we get an algorithm with:
\begin{itemize}
     \item 
     the query time: $\Oc\big(d^{\omega-1} + n^{\nu} +  d/(\gamma\log{n})\ \que{\frac{c+1}{2}}{\gamma\log{n}}\big) = \Oc\big(n^{\nu} +   n^{\Oc(\frac{1-\nu}{\epsilon^2}\log{\frac{1}{\epsilon}})}\big)$,
    
    \item the pre-processing time: $\Oc\big(d^{\omega-1} n  + d/(\gamma\log{n})\ \pre{\frac{c+1}{2}}{\gamma\log{n}}\big) = \Oc(d^{\omega-1} n).$
\end{itemize}
  Let us assume that the query time is $\Oc\big(n^{\nu} +   n^{D\frac{1-\nu}{\epsilon^2}\log{\frac{1}{\epsilon}}}\big)$. After setting
  \[\nu  = \frac{D\frac{1}{\epsilon^2}\log{\frac{1}{\epsilon}}}{D\frac{1}{\epsilon^2}\log{\frac{1}{\epsilon}} + 1},\]
  we get the query time complexity equal to $\Oc(n^{\frac{1}{\epsilon^2 D^{-1} \log^{-1}{\frac{1}{\epsilon}} + 1}})$.

 \item For $c\ge2$, after setting $\alpha$ to any constant value such that $\log{2} > \log{\alpha}  + 1/2$ and $\nu=0$ in Corollary \ref{colr}, we get the algorithm with:
  \begin{itemize}
  \item 
     the query time: $\Oc\big(d^{\omega-1} + n^{\nu} +  d/(\gamma\log{n})\ \que{\alpha}{\gamma\log{n}}\big) = \Oc(dn^{\Oc(\frac{1}{\log{c}})})$
    
    \item the pre-processing time: $\Oc\big(d^{\omega-1} n  + d/(\gamma\log{n})\ \pre{\alpha}{\gamma\log{n}}\big) = \Oc(d^{\omega-1} n).$
  \end{itemize}  
\end{itemize}
This ends the proof of the Theorem \ref{main_pre}.

One can produce the Monte Carlo version of Theorems \ref{main} and \ref{main_pre}, which have only slightly better complexities (some factors of $d$ would be removed), 
 because the dimension reduction is simpler in this case.

\section{Conclusion and Future Work}

We have presented the  \CNNT{} algorithm without false
negatives in $l_2$ for  any $c$. 
The future works concerns reducing the time complexity of the algorithm
or proving that these
restrictions are essential. We wish to match the time complexities given in
\cite{motwani} or show that the achieved bounds are optimal.

%TODO 
%\section{Acknowledgments}
%We would like to  thank  Andrzej Pacuk, Maricn Pilipczuk, Adam Witkowski and Kamila Wygocka for meaningfull discussion.
%This work was supported by grant NCN2014/13/B/ST6/00770
%of Polish National Science Center.

\bibliography{bib}

\appendix
\section{Dimension Reduction}\label{dim_red_app}
In this section, we repeat arguments presented in \cite{wygos_red}.
Let us start with the well-known Johnson-Lindenstrauss Lemma which is crucial to our results:
\begin{lemma}[Johnson-Lindenstrauss]\label{jl1}
    Let $Y\in \Rspace^d$ be  chosen  uniformly  from  the  surface  of  the $d$-dimensional  sphere.   Let
    $Z=(Y_1,Y_2,\dots,Y_k)$ be  the  projection  onto  the  first $k$ coordinates,  where $k < d$.   Then  for  any
    $\alpha < 1$
    :
    \begin{equation}
    \prob{\frac{d}{k}\dist{Z}_2^2 \le \alpha} \le \exp(\frac{k}{2}(1-\alpha+\log{\alpha})),
    \end{equation}

\end{lemma}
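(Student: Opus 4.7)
My plan is to reduce the statement to a Chernoff bound on two independent chi-squared random variables. The first step is to use rotational symmetry of the uniform distribution on the sphere: I would represent $Y = X / \dist{X}_2$ where $X \in \Rspace^d$ has i.i.d.\ standard Gaussian coordinates. Then $\dist{Z}_2^2 = L / S$ where $L = \sum_{i=1}^{k} X_i^2$ and $S = \sum_{i=1}^{d} X_i^2$. Setting $T = S - L = \sum_{i=k+1}^{d} X_i^2$, the random variables $L$ and $T$ are independent chi-squared with $k$ and $d-k$ degrees of freedom respectively. The event $\{\tfrac{d}{k}\dist{Z}_2^2 \le \alpha\}$ is equivalent to $\{(d-\alpha k) L - \alpha k T \le 0\}$, which is a convenient linear combination to attack with an exponential moment bound.

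Next, for any $t>0$, I would apply Markov's inequality to obtain $\prob{(d-\alpha k) L - \alpha k T \le 0} \le \Ex[\exp(-t(d-\alpha k) L)] \cdot \Ex[\exp(t\alpha k\, T)]$, using independence of $L$ and $T$. Since each $X_i^2$ has moment generating function $(1-2s)^{-1/2}$ for $s<1/2$, the two factors evaluate to $(1 + 2t(d-\alpha k))^{-k/2}$ and $(1 - 2t\alpha k)^{-(d-k)/2}$ respectively, valid for $t < 1/(2\alpha k)$. This gives a one-parameter family of upper bounds.

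The remaining work is to minimize the log of this bound over $t$. Taking the derivative and setting it to zero yields the optimum $t^{*} = (1-\alpha)/(2 \alpha (d-\alpha k))$ (a short calculation using the fact that the two terms in the derivative must balance). Substituting back, the factor $(1+2t^*(d-\alpha k))$ becomes $1/\alpha$, while $(1-2t^*\alpha k)$ becomes $(d - \alpha k)/(d-k)$. Thus the $k/2$ term cleanly produces $\exp(\tfrac{k}{2}\log\alpha)$, and the $(d-k)/2$ term produces $\exp(\tfrac{d-k}{2}\log\tfrac{d-\alpha k}{d-k})$. The latter equals $\exp(\tfrac{d-k}{2} \log(1 + \tfrac{k(1-\alpha)}{d-k}))$, which by $\log(1+x) \le x$ is bounded by $\exp(\tfrac{k(1-\alpha)}{2})$. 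Combining yields precisely $\exp(\tfrac{k}{2}(1 - \alpha + \log\alpha))$.

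The main obstacle I anticipate is purely bookkeeping: carrying out the minimization and the subsequent $\log(1+x) \le x$ step without sign errors. There is no conceptual difficulty — the only nontrivial ingredient beyond standard Chernoff machinery is the rotational-symmetry argument at the start that lets us treat the uniform distribution on $\spheret$ as a normalized Gaussian. An alternative route would be to note that $\dist{Z}_2^2$ has an explicit $\mathrm{Beta}(k/2, (d-k)/2)$ density and integrate directly, but the Gaussian representation keeps the algebra much lighter and matches how Lemma \ref{rjl} is later applied.
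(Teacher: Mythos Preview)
Your argument is correct: the Gaussian representation $Y=X/\dist{X}_2$, the reduction to the linear event $(d-\alpha k)L-\alpha k\,T\le 0$, the Chernoff step, the optimal $t^{*}=(1-\alpha)/(2\alpha(d-\alpha k))$, and the final $\log(1+x)\le x$ estimate all check out and yield exactly $\exp\!\big(\tfrac{k}{2}(1-\alpha+\log\alpha)\big)$.

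As for comparison with the paper: there is nothing to compare against. The paper does not prove Lemma~\ref{jl1}; it merely states it as ``the well-known Johnson--Lindenstrauss Lemma'' (citing \cite{wygos_red}) and then proceeds to use it in the proof of Lemma~\ref{rjl}. Your write-up therefore supplies a self-contained proof where the paper relies on an external reference. The route you chose---exponential moments of independent $\chi^2$ variables---is the standard one in the literature (essentially the Dasgupta--Gupta proof), so it is entirely in line with what the paper is implicitly invoking.
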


The basic idea is the following: we will introduce a number of linear mappings
to transform the $d$-dimensional problem to a number of problems with
 reduced dimension.

We will introduce $d/k$\footnote{For simplicity, let us assume that $k$ divides $d$, 
this can be achieved by padding extra dimensions with $0$'s.}
linear mappings $A^{(1)}, A^{(2)}, \dots, A^{(d/k)}:R^d \rightarrow R^k$, where $k<d$ and
show the following properties:
\begin{enumerate}
    \item for each point $x \in \Rdspace$, such that $\dist{x}_2 \le 1$, there exists $1 \le i \le d/k$, such that $\dist{A^{(i)} x}_2 \le 1$,
    \item for each point $x \in \Rdspace$, such that $\dist{x}_2 \ge c$, where $c>1$, the probability that there exists $1 \le i \le d/k$, such that $\dist{A^{(i)}x}_2 \le 1$ is bounded.
\end{enumerate}

The property $1.$ states, that for a given 'short' vector (with a length
smaller than $1$), there is always at least one mapping, which transforms this
vector to a vector of length smaller than $1$.  Moreover, we will show, that
there exists at least one mapping $A^{(i)}$, which does not increase the length
of the vector, i.e., such that $\dist{A^{(i)} x}_2 \le \dist{x}_2$. The
property $2.$ states, that we can bound the probability of a 'long' vector
($\dist{x}_2 > c$), being mapped to a 'short' one ($\dist{A^{(i)} x}_2 \le
1$). Using the standard concentration measure arguments, we will prove that
this probability decays exponentially in $k$.

\subsection{Linear mappings}

In this section, we will introduce linear mappings satisfying properties 1. and 2.
Our technique will depend on the concentration bound used to
prove the classic Johnson-Lindenstrauss Lemma.
In Lemma \ref{jl1},  we take a random vector and project it to
the first $k$ vectors of the standard basis of $\Rspace^d$.  In our settings,
we will project the given vector to a random orthonormal basis which gives the
same guaranties.
The mapping  $A^{(i)}$ consists
of $k$ consecutive vectors from the random basis of the
$\Rspace^d$ space scaled by $\sqrt{\frac{d}{k}}$.
The following reduction describes the basic properties of
our construction (presented also  in related work):
\primelemma*
\begin{proof}
Let $a_1, a_2, \dots, a_d$ be a random basis of $R^d$. Each of the
$A^{(i)}$ mappings is represented by a $k \times d$ dimensional matrix.  We will use $A^{(i)}$ for denoting both the mapping and the corresponding matrix.
The $j$th row of the matrix $A^{(i)}$ equals $A^{(i)}_j = \sqrt{\frac{d}{k}}
a_{(i-1)k+j}$. In other words, the rows of $A^{(i)}$ consist
of $k$ consecutive vectors from the random basis of the
$\Rspace^d$ space scaled by $\sqrt{\frac{d}{k}}$.

To prove the first property, observe that $A = \sum_{i=1}^d \dotpr{a_i}{x}^2 \le
1$, since the distance is independent of the basis.  Assume on the contrary, that for each $i$,
$\dist{A^{(i)}_2 x} > 1$. It follows that $d \ge d A = k \sum_{i=1}^{d}\dist{A^{(i)} x}_2^2 > d$.
This contradiction ends the proof of the first property.

For any $x \in \Rspace^d$, such that $\dist{x}_2 > c$, the probability:

\[\prob{\dist{A^{(i)}x}_2 \le \alpha} =
  \prob{\frac{\dist{A^{(i)}x}_2^2}{c^2} \le (\frac{\alpha}{c})^2} \le
  \prob{\frac{\dist{A^{(i)}x}_2^2}{\dist{x}_2^2} \le (\frac{\alpha}{c})^2}. \]
  
Applying Lemma \ref{jl1} ends the proof.
\end{proof}

The algorithm works as follows: for each $i$, we project $\Rspace^d$ to
$\Rspace^k$ using $A_i$ and solve the corresponding problem in the smaller
space. For each query point, we need to merge the solutions obtained for each
sub-problem.  This results in reducing the $\CNN{c}{d}$ to $d/k$ instances of
$\CNN{\alpha}{k}$.

\begin{lemma}[Lemma 2 in \cite{wygos_red}]\label{lemalgo}
For $1 < \alpha < c$ and $k < d$, the $\CNN{c}{d}$ can be reduced to $d/k$ instances of the $\CNN{\alpha}{k}$.
The expected pre-processing time equals $\Oc(d^{\omega-1} n + d/k\ \pre{\alpha}{k})$ and
the expected query time equals $\Oc(d^{\omega-1} + d/k\ e^{\frac{k}{2}\big(1 - (\frac{\alpha}{c})^2 + \log((\frac{\alpha}{c})^2)\big)} n + d/k\ \que{k}{\alpha})$.
\end{lemma}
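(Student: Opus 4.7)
The plan is to apply the Reduction Lemma directly: draw the $d/k$ random mappings $A^{(1)}, \ldots, A^{(d/k)}$ it guarantees, and for each $i$ build the projected dataset $S^{(i)} = \{A^{(i)} x : x \in S\} \subset \Rkspace$, then preprocess each $S^{(i)}$ with a black-box $\CNN{\alpha}{k}$ data structure. At query time, for a query $q$ I would compute all the projections $A^{(i)} q$, fire an $\alpha$-approximate near neighbor sub-query on each $S^{(i)}$, lift every returned candidate back to $\Rdspace$, and verify its actual distance to $q$; only points within distance $c$ are reported.

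Correctness should follow from the two properties of the Reduction Lemma. For no false negatives: if $p$ is a true near neighbor (so $\dist{p-q}_2 \le 1$), property 1 applied to $p-q$ guarantees some index $i$ with $\dist{A^{(i)}(p-q)}_2 \le 1 \le \alpha$, and then the no-false-negatives guarantee of the $i$-th sub-query forces $A^{(i)}p$ to appear among the returned candidates, so $p$ is rediscovered after verification. For no false positives: any $x$ with $\dist{x-q}_2 > c$ is discarded by the explicit final distance check.

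For the complexity, the key trick is that all $d/k$ projections can be assembled into a single $d \times d$ matrix, so projecting the entire dataset reduces to one multiplication of a $d \times d$ matrix by a $d \times n$ matrix, giving $\Oc(n d^{\omega-1})$ by blocking (using $n \gg d$); the $d/k$ sub-problem preprocessors contribute $\Oc(d/k\ \pre{\alpha}{k})$. The same stacking handles a batch of $d$ queries via a $d \times d$ by $d \times d$ multiplication, i.e., $\Oc(d^{\omega-1})$ amortized per query for producing all the $A^{(i)} q$.

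The main obstacle, and the source of the unusual middle term in the query bound, is controlling spurious candidates. By property 2 of the Reduction Lemma, each far point $x$ with $\dist{x-q}_2 \ge c$ survives in any fixed sub-problem with probability at most $e^{(k/2)(1 - (\alpha/c)^2 + \log((\alpha/c)^2))}$. Summing over the $n$ input points and the $d/k$ projections and invoking linearity of expectation produces the expected $d/k\cdot e^{(k/2)(1 - (\alpha/c)^2 + \log((\alpha/c)^2))}\cdot n$ spurious candidates that must be verified; combined with the $d/k\ \que{\alpha}{k}$ cost of the sub-queries and the $\Oc(d^{\omega-1})$ projection cost, this reproduces exactly the stated query bound.
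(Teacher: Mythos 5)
Your proposal is correct and follows essentially the same route as the paper: draw the $d/k$ mappings from the Reduction Lemma, project the dataset, build black-box $\CNN{\alpha}{k}$ structures, project the query, verify all returned candidates in the original space, and account for false positives via property 2 and linearity of expectation. Your observation that the $d/k$ projections stack into a single $d\times d$ matrix (giving the $\Oc(nd^{\omega-1})$ preprocessing and $\Oc(d^{\omega-1})$ amortized batch-query cost) is exactly the paper's ``change of basis'' step, so the two arguments are the same modulo the paper's extra bookkeeping of the $d^3$ basis-generation and $dnk$ projection terms, which are absorbed under the assumption $d^{4-\omega}<n$.
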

\begin{proof}
We use the assumption that $k < d$ and $d^{4-\omega} < n$ to simplify the complexities.
The pre-processing time consists of:
\begin{itemize}
    \item $d^3$: the time of computing a random orthonormal basis of $\Rspace^d$.
    \item $d^{\omega-1} n$: the time of changing the basis to $a_1, a_2, \dots, a_d$.
    \item $d n k$: the time of computing $A^{(i)} x$ for all $1 \le i \le d$ and for all $n$ points.
    \item $d/k\ preproc(\alpha, k)$: the expected pre-processing time of all sub-problems.

\end{itemize}
The query time consists of:
\begin{itemize}
    \item $d^{\omega-1}$: the amortized time of changing the basis to $a_1, a_2, \dots, a_d$.
    \item $d/k\ e^{\frac{k}{2}\big(1 - (\frac{\alpha}{c})^2 + \log((\frac{\alpha}{c})^2)\big)} n$: the expected number of false positives (by Lemma \ref{rjl}).
    \item $d/k\ query(k, \alpha)$: the expected query time of all sub-problems.
\end{itemize}
\end{proof}

The following corollary simplifies the formulas used in Lemma \ref{lemalgo} and
shows that  the $\CNN{c}{d}$ can be reduced
to a number of problems of dimension $\log{n}$ in an efficient way.  Namely, setting
$k=\big(\frac{2(1-\nu)}{1 - (\frac{\alpha}{c})^2 + \log((\frac{\alpha}{c})^2)}\big)\log{n}$ we get (presented also  in related work):

\primecor*

\end{document}